  \providecommand\BibTeX{{%
    \normalfont B\kern-0.5em{\scshape i\kern-0.25em b}\kern-0.8em\TeX}}}
\newtheorem{theorem}{Theorem}
\newcommand{\hide}[1]{}
\newcommand{\bit}{\begin{compactitem}[$\bullet$]}
\newcommand{\eit}{\end{compactitem}}
\newcommand{\ben}{\begin{compactenum}}
\newcommand{\een}{\end{compactenum}}
\newcommand{\generator}{\textsc{HyperPA}\xspace}
\newcommand{\ssl}{Subset Sampling\xspace}
\newcommand{\floor}[1]{\left\lfloor #1 \right\rfloor}
\definecolor{myGreen}{rgb}{0.133,0.543,0.133}
\newcommand{\cmark}{\textcolor{myGreen}{\ding{51}}}%
\newcommand{\xmark}{\textcolor{red}{\ding{55}}}%
\newcommand{\nodeproj}{{node-level decomposed graphs}\xspace}
\newcommand{\edgeproj}{{edge-level decomposed graphs}\xspace}
\newcommand{\triangleproj}{{triangle-level decomposed graphs}\xspace}
\newcommand{\Edgeproj}{{Edge-level decomposed graphs}\xspace}
\newcommand{\Triangleproj}{{Triangle-level decomposed graphs}\xspace}
\newcommand{\Fcliqueproj}{{4clique-level decomposed graphs}\xspace}
\newcommand{\connectcomp}{{Giant connected component}\xspace}
\newcommand{\degdistr}{{Heavy-tailed degree distribution}\xspace}
\newcommand{\effdiam}{{Small effective diameter}\xspace}
\newcommand{\clustcoef}{{High clustering coefficient}\xspace}
\newcommand{\sngvals}{{Skewed singular values}\xspace}
\newcommand{\smallsection}[1]{{\vspace{0.05in} \noindent {\bf{\underline{\smash{#1}}}}}}
\newcommand{\Interaction}{Subset interaction\xspace}
\newcommand{\interaction}{subset interaction\xspace}
\newcommand{\interactions}{subset interactions\xspace}
\newcommand\blue[1]{\textcolor{black}{#1}}
\newcommand\kijung[1]{\textcolor{red}{[Kijung: #1]}}
\definecolor{myGreen2}{rgb}{0.133,0.543,0.133}
\newcommand\red[1]{\textcolor{red}{#1}}
\begin{document}

\title{Structural Patterns and Generative Models of Real-world Hypergraphs}


\author{Manh Tuan Do}
\affiliation{%
\institution{KAIST EE}
}
\email{manh.it97@kaist.ac.kr}

\author{Se-eun Yoon}
\affiliation{%
\institution{KAIST EE}	
}
\email{granelle@kaist.ac.kr}

\author{Bryan Hooi}
\affiliation{%
  \institution{NUS School of Computing}
}
\email{bhooi@comp.nus.edu.sg}

\author{Kijung Shin}
\authornote{Corresponding author.}
\affiliation{%
\institution{KAIST AI \& EE}
}
\email{kijungs@kaist.ac.kr}

\renewcommand{\shortauthors}{Do et al.}

\begin{abstract}
\vspace{-1mm}
Graphs have been utilized as a powerful tool to model pairwise relationships between people or objects. Such structure is a special type of a broader concept referred to as hypergraph, in which each hyperedge may consist of an arbitrary number of nodes, rather than just two. A large number of real-world datasets are of this form – for example, lists of recipients of emails sent from an organization, users participating in a discussion thread or subject labels tagged in an online question. However, due to complex representations and lack of adequate tools, little attention has been paid to exploring the underlying patterns in these interactions.

In this work, we empirically study a number of real-world hypergraph datasets across various domains. In order to enable thorough investigations, we introduce the multi-level decomposition method, which represents each hypergraph by a set of pairwise graphs.
Each pairwise graph, which we refer to as a $k$-level decomposed graph, captures the interactions between pairs of subsets of $k$ nodes.
We empirically find that at each decomposition level, the investigated hypergraphs obey five structural properties. These properties serve as criteria for evaluating how realistic a hypergraph is, and establish a foundation for the hypergraph generation problem. We also propose a hypergraph generator that is remarkably simple but capable of fulfilling these evaluation metrics, which are hardly achieved by other baseline generator models.
\vspace{-1mm}
\end{abstract}
\maketitle


\vspace{-1mm}
\section{Introduction}
\vspace{-1mm}
\label{sec:intro}
\begin{figure}[t]
	\centering
	\includegraphics[width=0.8\columnwidth]{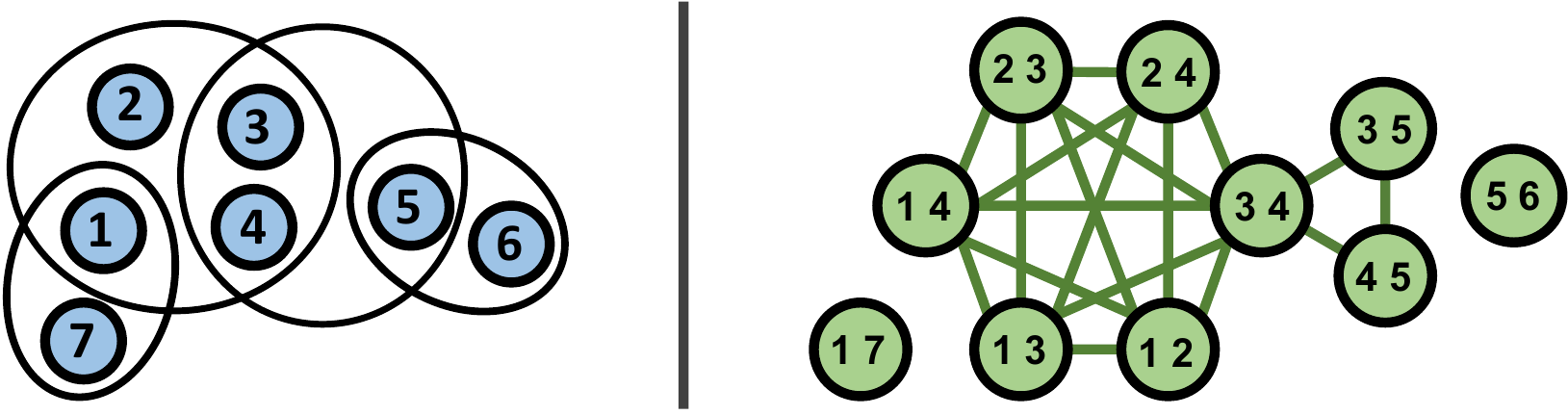} \\
	\vspace{-0.5mm}
	\caption{\label{fig:example}
		A hypergraph and its $2$-level decomposed graph.
	}
\end{figure}
\input{summarytags-ask-ubuntu-figure_shrink.tex}

In our digital age, interactions that involve a group of people or objects are ubiquitous \cite{benson2018simplicial, benson2018sequences, benson2016higher}. These associations 
arise from various domains, ranging from academic communities, online social networks to pharmaceutical practice. In particular, research papers are often published by the collaborations of several coauthors, social networks involve group communications, and several related medications may be applied as a treatment rather than just two.

Such structures can be represented as \textit{hypergraphs} \cite{berge1989hyper, bonacich2004hyper}, which is a generalization of the usual notion of graphs. In hypergraphs, each node can be a person or an object. However, each hyperedge acts as an interaction of an arbitrary number of nodes. For example, if each node represents an author, a hyperedge can be treated as a research paper which was published by a group of authors. 
A hyperedge also reveals the \textit{\interactions} among the elements of each subset, which this work pays special attention to.
A \interaction among nodes (e.g., $\{a,b\}$) is defined as their co-appearance as a subset of a hyperedge (e.g., $\{a,b,c,d\}$).  
The freedom of number of nodes involved in each hyperedge and \interactions naturally contribute to the complexity of hypergraphs.

While pairwise graphs have been extensively studied in terms of mining structures \cite{milo2002network, broder2000graph, girvan2002community}, discovering hidden characteristics \cite{faloutsos1999power, bollobas2004diameter, eikmeier2017revisiting, klein1999the} 
as well as evolutionary patterns \cite{paranjape2017motifs, leskovec2008microscopic, leskovec2005graphs}, little attention has been paid to defining and addressing analogous problems in hypergraphs. Due to the complexity of \interactions, any single representation of hypergraphs relying on pairwise links would suffer from information loss. Given that most existing graph data structures only capture relationships between pairs of nodes, and more importantly, most patterns discovered are based on pairwise links-based measurements, directly applying the existing results in pairwise graphs to hypergraphs constitutes a challenge.

Here we investigate several hypergraph datasets among various domains \cite{benson2018simplicial, sinha2015MAG, yin2017local}. We introduce the \textit{multi-level decomposition} of hypergraphs, which captures relationships between subsets of nodes. This offers a set of pairwise link representations convenient for analysis while guaranteeing to recover the original hypergraphs. In the most elementary type of decomposition, referred to as ``node-level decomposed graph'' in this paper, two nodes are linked if they appear in at least one hyperedge together. This is the decomposition for $k=1$. In the $k$-level decomposed graph, a node is defined as a set of $k$ nodes in the original hypergraph, and two nodes are connected if their union appears in a hyperedge (see Fig.~\ref{fig:example}).

Using the multi-level decomposition, we find that the decomposed graphs of thirteen real-world hypergraphs generally obey the following well-known properties of real-world graphs, across different levels: (1) {\it giant connected components}, (2) {\it heavy-tailed degree distributions}, (3) {\it small effective diameters}, (4) {\it high clustering coefficients}, and (5) {\it skewed singular-value distributions}.
This decomposition also reveals how well such \interactions are connected, and this connectivity varies across different domains.

What could be the possible underlying principles for such patterns?
Driven by this question, we propose a simple hypergraph generator model called \generator. By some proper modifications of \textit{preferential attachment} \cite{barabasi1999emergence, albert2002statistical, klein1999the}, which account for degree as a group, nodes can ``get rich'' together while maintaining \interactions. Compared to two other baseline models, \generator shows more realistic results in reproducing the patterns discovered in real-world hypergraphs and resembling the connectivity of such \interactions (see Fig.~\ref{fig:summary_tags-ask-ubuntu_figure}).

Findings in common properties of real-world hypergraphs and their underlying explanations can be significant for several reasons:
(1) {\textit{anomaly detection}}: if some data significantly deviates from the set of common patterns, it is reasonable to raise an alarm for anomalies,
(2) {\textit{anonymization}}: by fully reproducing these patterns, organizations may synthesize datasets to avoid disclosing important internal information. 
(3) {\textit{simulation}}: generated hypergraphs can be utilized for ``what-if'' simulation scenarios when collecting large-size hypergraph datasets is costly and difficult.

In short, the main contributions of our paper are three-fold.
\bit
\item {\bf Multi-level decomposition}: a tool that facilitates easy and comprehensive analysis of \interactions in hypergraphs.
\item {\bf Patterns}: five structural properties that are commonly held in thirteen real-world hypergraphs from diverse domains.
\item {\bf Hypergraph generator (\generator)}: a simple but powerful model that produces hypergraphs satisfying the above properties.
\eit
{\bf Reproducibility:} We made the datasets, the code, and the full experimental results available at \url{https://github.com/manhtuando97/KDD-20-Hypergraph}.

The remaining sections of this paper are outlined as follow: Sect. \ref{sec:background} provides a brief survey of related work. In Sect. \ref{sec:meth}, we introduce our \textit{decomposition} tool which facilitates our understanding of structural properties of hypergraphs. Our empirical findings on real-world hypergraph datasets are presented in Sect. \ref{sec:exp}. Sect. \ref{sec:gen} introduces hypergraph generators and demonstrates how these models perform in terms of reproducing the real-world patterns. We discuss and conclude our work in Sect. \ref{sec:concl}.

\vspace{-1mm}
\section{Background and Related Work}
\vspace{-1mm}
\label{sec:background}
\begin{figure*}[t]
	\centering
	\vspace{-4mm}
	\includegraphics[width=\linewidth]{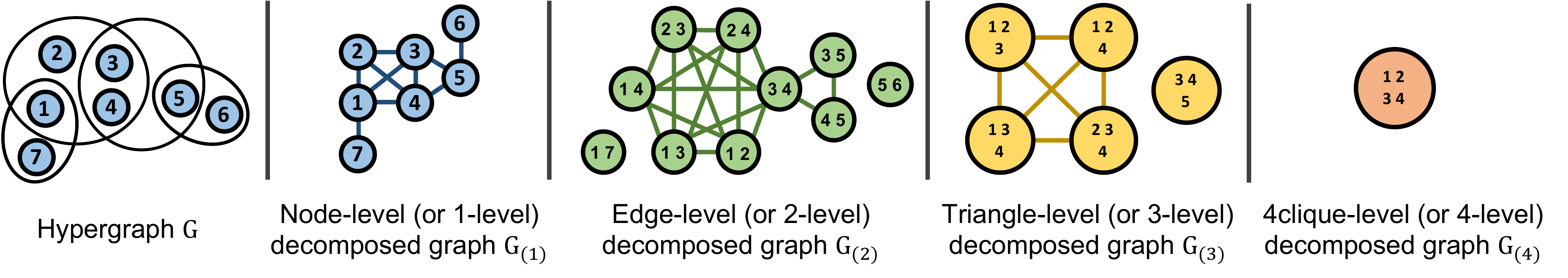} \\
	\vspace{-2mm}
	\caption{\label{fig:decomposition}
		Pictorial description of multi-level decomposition of a hypergraph $G$.
		Each $k$-level decomposed graph reveals interactions between subsets of $k$ nodes.
	}
\end{figure*}

\smallsection{Graph properties:}
Many empirical studies have been conducted to explore common properties of real-world pair-wise graphs based on predefined measurements \cite{easley2010networks}. There are two main types of these properties: static and dynamic. Static properties are revealed from a snapshot of the graphs at a particular time, and they include degree distribution \cite{abello1998functional, faloutsos1999power}, diameter \cite{bollobas2004diameter, abello2013handbook}, distribution of eigenvalues \cite{eikmeier2017revisiting}, and more \cite{yin2017local, shin2018patterns, klein1999the, milgram1967small, chung2002average, broder2000graph, bollobas2004diameter, albert1999internet, kleinberg2002small, akoglu2010oddball}. Dynamic properties examine the evolution of a graph 
over a period of time. Real-world graphs are found to possess an increasing average degree and a shrinking diameter \cite{leskovec2005graphs}. Other dynamic properties include short distances of spanning new edges \cite{leskovec2008microscopic}, temporal locality in triangle formation \cite{shin2017wrs}, and temporal network motifs \cite{paranjape2017motifs,liu2018sampling}.

\smallsection{Graph generative models:}
In conjunction, numerous graph generator models have been developed to produce synthetic graphs satisfying these commonly held patterns. Some of them focus on reproducing realistic degree distributions \cite{cooper2003general, barabasi1999emergence, mitzenmacher2004brief, mahadevan2006systematic}. Others exploit locality to generate communities within the graph \cite{kumar2000stochastic, klein1999the, kolda2014scalable, vazquez2003growing, watts1998collective, newman2001clustering}. In \cite{leskovec2005graphs,akoglu2008rtm}, dynamic patterns of graph evolution are recaptured. While most of these stochastic generator models rely on empirical results to demonstrate their abilities to repeat realistic behavior, \cite{leskovec2010kronecker,akoglu2008rtm} provide theoretical guarantees. 
Although most of the aforementioned graph generators are self-contained stochastic models, several models require some explicit fitting to real data in order to exactly reproduce the patterns \cite{sala2010measurement, edunov2016darwini,leskovec2010kronecker}.

\smallsection{Hypergraphs:}
Hypergraphs are used for representing various entities in diverse fields, including biology, medicine, social networks, and web \cite{benson2018simplicial, benson2016higher, akoglu2008rtm}. To better analyze and process hypergraphs, there has been an increasing interest in extending studies on graphs to hypergraphs, including spectral theory \cite{zhou2006learning, klein1999the} and triadic closure theory \cite{benson2018simplicial}. Studies have also proposed models of the generation and evolution of hyperedges \cite{benson2018sequences, benson2018simplicial, stasi2014beta, chodrow2019configuration}. However, \cite{benson2018sequences} focuses on repeat patterns of hyperedges, particularly on the recency bias and intensity of repeats, and generates only the next hyperedge, given all previous hyperedges.  \cite{benson2018simplicial} focuses on a particular type of hypergraph dynamics, namely simplicial closure. On the other hand, \cite{stasi2014beta} and \cite{chodrow2019configuration} try to configure the generated hypergraphs to satisfy a given degree distribution without explicitly accounting for \interactions in exploring the patterns.

In our work, we study the general patterns of real-world hypergraphs, encompassing the wide range of extensions studied in graphs with a strong emphasis on `\interactions'. On such basis, we propose and evaluate generative models for hypergraphs.

\vspace{-1mm}
\section{Multi-level Decomposition}
\vspace{-1mm}
\label{sec:meth}
In this section, we introduce the multi-level decomposition, which is our method for analyzing hypergraphs. Our motivation for the multi-level decomposition is that it is not straightforward to investigate the properties of hypergraphs in their raw form. We instead seek a way to analyze hypergraphs through the lens of ordinary graphs. By transforming hypergraphs into graphs, we can adopt the various properties studied in graphs for hypergraphs.
%

\smallsection{Hypergraphs and \interactions:}
A \textit{hypergraph} is defined as $G=(V,E)$, where $V$ is a set of nodes and $E\subset 2^{V}$ is a set of \textit{hyperedges}. Each hyperedge $e\subseteq V$ is a set of $|e|$ nodes that have appeared as a group.
Distinguished from hyperedges, a {\it \interaction} among two or more nodes indicates their co-appearance as a subset of a hyperedge.
For example, a hyperedge $\{a,b,c,d\}$ leads to the following \interactions: $\{a,b,c,d\}$, $\{a,b,c\}$, $\{b,c,d\}$, $\{c,d,a\}$, $\{d,a,b\}$,  $\{a,b\}$, $\{a,c\}$, $\{a,d\}$, $\{b,c\}$,  $\{b,d\}$, and $\{c,d\}$.

\smallsection{Multi-level decomposition:}
Given a hypergraph $G=(V,E)$, the \textit{multi-level decomposition} of $G$ is defined as a set of $k$-level decomposed graphs for every $k \in \{1,...,\max_{e \in E}(|e|)\}$, where $\max_{e \in E}(|e|)\}$ is the maximum size of a hyperedge in $E$. 
The $k$-level decomposed graph, which is illustrated in Fig.~\ref{fig:decomposition}, is defined below.

\smallskip
\noindent\textbf{Definition 1} \textsc{($k$-level decomposed graph).} 
The $\mathit{k}$\textit{-level decomposed graph} of a hypergraph $G=(V, E)$ is $G_{(k)}=(V_{(k)}, E_{(k)})$ where
\begin{align*}
	 & V_{(k)}  := \{v_{(k)} \in 2^{V} : |v_{(k)}| = k \text{ and } \exists e\in E \text{ s.t. } v_{(k)} \subseteq e \}, \\
	 & {E_{(k)}  := \{\{u_{(k)},v_{(k)}\} \in \binom{V_{(k)}}{2} : \exists e\in E \text{ s.t. } u_{(k)} \cup v_{(k)} \subseteq e \}.}
\end{align*}

The nodes in the $k$-level decomposed graph $G_{(k)}$ of a hypergraph $G$ are the sets of $k$ nodes in $G$ that appear together in at least one hyperedge in $G$.
In $G_{(k)}$, two sets of $k$ nodes are connected by an edge if and only if there exists a hyperedge in $G$ that contains both.
That is, the $k$-level decomposed graph naturally represents how each set of $k$ nodes interacts, as a group, with other sets of $k$ nodes.\footnote{Compared to projected graphs \cite{yoon2020much}, which reveal only interactions between node sets with overlaps, decomposed graphs reveal all interactions between node sets.}
Utilizing decomposed graphs constitutes several advantages:
\begin{itemize}[$\bullet$]
\item{\bf{\Interaction}}: decomposed graphs reveal \interactions between subsets of nodes.
\item{\bf{Pairwise graph representation}}: decomposed graphs can be easily analyzed with existing measurements for pairwise graphs.
\item{\bf{No information loss}}: the original hypergraph can be recovered from the decomposed graphs (see Appendix~\ref{appendix:proof:recovery}).
\end{itemize}
Notice that the notion of $k$-level decomposition is a generalization of an existing concept: when $k=1$, the decomposed graph corresponds to the widely-used pairwise projected graph.


In our study, we focus on $k$-level decomposed graphs with $k \in \{1, 2, 3, 4\}$,
as most hyperedges in real-world hypergraphs are of sizes only up to $4$. For simplicity, we call them \textit{node-level}, \textit{edge-level}, \textit{triangle-level}, and \textit{4clique-level decomposed graphs}, respectively.







\vspace{-1mm}
\section{Observations}
\vspace{-1mm}
\label{sec:exp}
In this section, we demonstrate that the following structural patterns hold in each level of decomposed graphs of real hypergraphs\footnote{By our definition, a hyperedge of size $n > k$ results in ${n}\choose{k}$ nodes and ${{n}\choose{k}}\choose{2}$ edges in the $k$-level decomposed graph. For example, a hyperedge of 8 nodes is decomposed into ${8}\choose{3}$ $= 56$ nodes and ${56}\choose{2}$ $= 1,540$ edges in the triangle-level decomposed graph. 
In order to avoid dominance by the edges resulted from  large-size hyperedges, in the node-level decomposed graphs, only hyperedges with up to 25 nodes are considered.
In higher-level decomposition, we only consider hyperedges with up to $7$ nodes. Actually, in each dataset, the vast majority of hyperedges consist of $7$ or fewer nodes.}\textsuperscript{,}\footnote{We used Snap.py (\url{http://snap.stanford.edu/snappy}) for computing graph measures.}:
\begin{itemize}[$\bullet$]
	\item {\bf P1.} \connectcomp 
	\item {\bf P2.}  \degdistr 
	\item {\bf P3.}  \effdiam 
	\item {\bf P4.}  \clustcoef 
	\item {\bf P5.} \sngvals 
\end{itemize}

These patterns, which are described in detail in the following subsections, are supported by our observations in \textbf{thirteen} real hypergraph datasets of medium to large sizes.
Details on the datasets can be found in Appendix~\ref{appendix:description}, and the complete set of observations is available in \cite{appendix}.
Below, we provide the intuition behind them and present a random hypergraph model that we use as the null model.

\smallsection{Intuition behind the patterns.}
Consider the coauthorship data as an example: in our node-level decomposed graph, each node represents an author, and two nodes are connected if and only if these two authors have coauthored at least one paper before. 
Therefore, this node-level decomposition can be interpreted as an author network. 
Such node-level decomposed graphs are not  ``real'' graphs since they are obtained by decomposing the original hypergraphs.
However, they represent pairwise relationships as real-world graphs do, and by this interpretation, we deduce that the node-level decomposed graphs of real-world hypergraphs will exhibit the five patterns (i.e., P1-P5), which are well-known for real-world graph\blue{s} \cite{leskovec2005graphs, leskovec2010kronecker, faloutsos1999power,broder2000graph,milgram1967small, albert1999internet, bollobas2004diameter,yin2017local, kumar2000stochastic}. We further suspect that these patterns also hold at higher levels of decomposition.

\smallsection{Null Model: Random Hypergraphs (Null.):}
In order to show \textbf{P3} and \textbf{P4} are not random behavior of any hypergraph, we use a random hypergraph corresponding to each real hypergraph as the null model. Specifically, given a hypergraph, the null model is constructed by randomly choosing nodes to be contained in each hyperedge, while keeping its original size.

%

\vspace{-1mm}
\subsection{P1. Giant connected component}
\label{sec:pattern:giant}
\vspace{-1mm}

This property means that there is a connected component comprising of a large proportion of nodes, and this proportion is significantly larger (specifically, at least $70$ times larger) than that of the second largest connected component. The majority of nodes in a network are connected to each other \cite{kang2010patterns}. This property serves as a basis for the other properties. For example, without a giant connected component (i.e, the graph is ``shattered'' into small connected communities), 
diameter would clearly be small as a consequence, not as an independent property of the dataset.
\input{046cshatterlevel.tex}

In Table~\ref{tab:gcc}, we report the size of the largest connected component at all decomposition levels. The connectivity of \interactions, represented as the highest level for which the decomposed graph maintains a giant connected component, varies among datasets. In particular, while the co-authorship datasets are shattered at the triangle level, the online-tags datasets retain giant connected components until the 4clique level. Note that while our decomposition is only up to the 4clique level, there are many hyperedges of sizes at least 5, implying that when the graph is shattered, it consists of several isolated cliques, not just isolated nodes.

There is a positive correlation between the distribution of hyperedge sizes and whether the graph is shattered at the edge-level decomposition. Take the proportion of unique hyperedges of sizes at most $2$ as the feature. Datasets with this feature greater than $75\%$ are shattered, and the others retain giant connected components. At the triangle level, $6$ (out of $13$) datasets have giant connected components.
Except for \textit{email-Eu} and \textit{NDC-classes}, the datasets where the proportion of hypergedges of sizes at most 3 is larger than $60\%$ are shattered at this level. The others possess a giant connected component. 

\input{047summary_degrees_sng_figure.tex}
\input{041cnodeproperties.tex}

\vspace{-1mm}
\subsection{P2. Heavy-tailed degree distribution}
\label{sec:pattern:deg}
\vspace{-1mm}

The degree of a node is defined as the number of its neighbors. This property means that the degree distribution is heavy-tailed, i.e decaying at a slower rate than the exponential distribution (exp.). This can be partially explained by the ``rich gets richer":  high-degree nodes are more likely to form new links \cite{newman2001clustering}. Besides visual inspection, we confirm this property by the following two tests:
\begin{itemize}[$\bullet$]
\item Lilliefors test \cite{lilliefors1969kolmogorov} is applied at significance level $2.5\%$ with the null hypothesis $H_{0}$ that the given distribution follows exp.
\item The likelihood method in \cite{clauset2009power, alstott2014powerlaw} is used on the given distribution to compute the likelihood ratio $r$ of a heavy-tailed distribution (power-law, truncated power-law or lognormal) against exp. If $r>0$, the given distribution is more similar to a heavy-tailed distribution than exp.
\end{itemize}

In Fig.~\ref{fig:summary_degree_sng_figure},
we illustrate that for each dataset, at the decomposition level in which there is a giant connected component, the degree distribution is heavy-tailed.
Applying the two tests, in all cases, either $H_{0}$ is rejected or $r>0$ (both claims hold in most cases), indicating evidence for heavy-tailed degree distribution.\footnote{In \textit{coauth-DBLP}, at the edge level, $H_{0}$ is accepted at $2.5\%$ significance level, but the loglikelihood ratios of the heavy-tailed distributions over exp. are greater than 5000.} The loglikelihood ratios are reported in Table~\ref{tab:heavy_tailed_deg_main}. Except for \textit{email-Eu} at the node level, in all cases, at least one heavy-tailed distribution has a positive ratio, implying that the degree distribution is more similar to that distribution than it is to exp.

\input{090summary_loglikelihood_ratio_deg_main.tex}

\vspace{-1mm}
\subsection{P3. Small diameter}
\label{sec:pattern:diam_clust} 
\vspace{-1mm}

Decomposed graphs are usually not completely connected, and it makes diameter subtle to define. 
We adopt the definition in \cite{leskovec2005graphs}, where the effective diameter is the minimum distance $d$ such that approximately $90 \%$ of all connected pairs are reachable by a path of length at most $d$. 
This property means that the effective diameter in real datasets is relatively small, and most connected pairs can be reachable by a small distance \cite{watts1998collective}.
Note that the null model also possesses this characteristic, 
and comparing real-world datasets and the corresponding null model in this aspect does not yield consistent results. 
The effective diameters at the 4 decomposition levels are highlighted in Tables~\ref{tab:node_properties} and \ref{tab:edge_triangle_4clique_properties}.

\vspace{-1mm}
\subsection{P4. High clustering coefficient}
\label{sec:pattern:clust}
\vspace{-1mm}

We make use of the clustering coefficient $C$ \cite{watts1998collective}, defined as the average of local clustering coefficients of all nodes. The local clustering coefficient $C_v$ of each node $v$ is defined as:
 $$C_{v} := 2\times \frac{\text{the number of triangles involving $v$}}{\text{the number of connected triples of nodes involving $v$}}.$$
This property means that the statistic in the real datasets is significantly larger than that in the corresponding null models. As communities result in a large number of triangles, this property implies the existence of many communities in the network. 

In Table~~\ref{tab:node_properties}, clustering coefficients of the datasets are compared against that of the corresponding null model at the node-level decomposition.
From the edge level, the decomposed graph of the null model is almost shattered into small isolated cliques. As a result, the clustering coefficient is unrealistically high, making it no longer valid to compare this statistic to that of the real-world data. Results at the edge or higher-level decompositions are reported in Table~\ref{tab:edge_triangle_4clique_properties}.

\vspace{-1mm}
\subsection{P5. Skewed singular values}
\label{sec:pattern:sng}
\vspace{-1mm}

This property means the singular-value distribution is usually heavy-tailed, and it is verified in the same manner as the pattern \textbf{P2}.
In all cases where a giant connected component is retained, 
either $H_{0}$ is rejected or the log likelihood ratio $r > 0$, implying that the singular-value distributions are heavy-tailed.
Specifically, as seen in Table~\ref{tab:heavy_tailed_deg_main}, except for \textit{tags-stack-overflow} at the edge level, in all cases, at least one heavy-tailed distribution has a positive ratio.
Some representative plots for singular-value distributions of real datasets are provided in Fig.~\ref{fig:summary_degree_sng_figure}. 






\input{048summary_properties.tex}

To support the patterns {\bf P1}-{\bf P5}, we could provide only some representative results in Tables~\ref{tab:node_properties}-\ref{tab:edge_triangle_4clique_properties}, and Fig.~\ref{fig:summary_degree_sng_figure} due to the space limit. The complete set of figures and numerical data 
 can be found in \cite{appendix}.  

\vspace{-1mm}
\section{Hypergraph Generators}
\vspace{-1mm}
\label{sec:gen}
We have shown that five common properties of real-world pairwise graphs are revealed at different levels of decomposition of real-world hypergraphs. In this section, we present \generator, our proposed hypergraph generator model. By analyzing several statistics, we demonstrate that \generator can exhibit the known properties at several levels of decomposition. Compared to two baseline models, \generator demonstrates a better performance in terms of satisfying the properties at all considered decomposition levels. 

\vspace{-1mm}
\subsection{Intuition behind \generator}
\vspace{-1mm}

The main idea behind our \generator is to take the \interactions in decomposed graphs into consideration. Recall that the null-model without such consideration in Sect.~\ref{sec:exp} is shattered into isolated cliques without a giant connected component once it is decomposed into higher decomposition levels.

Intuitively, in order to reproduce the desired patterns in multi-level decomposed graphs, the generation process should have the following characteristics:
\bit
    \item For heavy-tailed degree distribution, ``the rich should get richer'' \cite{barabasi1999emergence}. However, in order to recapture such pattern at higher decomposition levels, groups of nodes should ``get rich'' together rather than individually.
    \item In order to lead to a high clustering coefficient, communities of correlated nodes should form. As an analogy, in research publications, authors tend to collaborate with those who are on the same field or affiliation, rather than any authors. 
    \item However, several pairs of nodes among the communities should also be connected in order for the graph to have a giant connected component and a small effective diameter. 
\eit

\vspace{-1mm}
\subsection{Details of \generator}
\label{sec:hyperPA}


We describe our proposed generator \generator, whose pseudocode is provided in Algorithm \ref{algo:main}. \generator repeatedly introduces a new node to the hypergraph, and form\blue{s} new hyperedges. When a node is added, \generator creates $k$ new hyperedges where $k$ is sampled from a predetermined distribution $NP$. For each new hyperedge introduced by this new node, its size $s$ is sampled from a predetermined distribution $S$. When choosing other nodes to fill in this new hyperedge, it takes into consideration all groups containing $s-1$ nodes. Among all such groups, the chance of being chosen for each group is proportional to its degree. 
The \textit{degree} of each group is defined as the number of hyperedges containing that group.

\generator uses 3 statistics: the number of nodes $n$, the distribution of hyperedge sizes $S$ and the distribution of the number of new hyperedges per new node $NP$.
We obtain them from the real dataset whose patterns \generator is trying to reproduce. Regarding $NP$, we sort hyperedges according to timestamps, and reassign nodes into new node ids based on this chronological order. We then learn NP by accounting, for each (new) node id $i$, $HE_i - HE_{i-1}$, where $HE_i$ is the number of hyperedges consisting of nodes with ids less than or equal to $i$.


\begin{algorithm}[t]
	\small
	\caption{\generator: Hypergraph generator based on Preferential Attachment (Proposed Model) \label{algo:main}} 
	\SetKwInOut{Inputs}{Inputs}
	\SetKwInOut{Output}{Output}
	\Inputs{(1) distribution of hyperedge sizes $S$ (with max size $\bar{s}$),\\ (2) distribution of number of new hyperedges $NP$,\\ (3) number of nodes $n$}
	\Output{synthetic hypergraph $G$}
	initialize $G$ with $\floor{{\bar{s}}/{2}}$ disjoint hyperedges of size $2$, and compute the degree of all their subsets\\
	\For{$i \gets 1$ to $n$}{
		sample a number $k$ from $NP$.\\
		\For{$j \gets 1$ to $k$}{
			sample a hyperedge size s from $S$ \\
			\uIf{$s=1$}{
				add the hyperedge $\{i\}$ to $G$
			}
			\ElseIf{\normalfont{all $(s-1)$-sized groups have 0 degree}}{
				choose $s-1$ nodes randomly\\
				add the hyperedge of $i$ and the $s-1$ nodes to $G$
			}\Else{
				choose a group of size $(s-1)$ with probability proportional to degree \\
				add the hyperedge of $i$ and the $s-1$ nodes to $G$
			}
		}
		\For{{\bf each} \normalfont{of the $k$ newly formed hyperedges with $i$}}{
			increase the degree of all its subsets by $1$
		}
	}
\end{algorithm}

In Algorithm \ref{algo:main}, most of the times when $s>1$, lines 12-13 are executed (a proof is given in Appendix~\ref{appendix:proof_hyperPA}), where \generator chooses a group of nodes based on its degree. As preferential attachment is conducted in a group-like manner, nodes ``get rich'' together, and when decomposed, they form communities, leading to \blue{a} high clustering coefficient.
When a new node is introduced, it forms multiple hyperedges. Since these hyperedges involve nodes from different communities, the introduction of a new node can potentially connect several communities, leading to a giant connected component and a small effective diameter.

 
\generator preserves \interactions, in the sense that most of the times, all of the nodes chosen to fill in a new hyperedge are those from the same previous hyperedge. In order to compare against \generator, we examine two baseline models, \textit{NaivePA} and {\it \ssl}, in the following subsections. They exhibit no or weak \interactions, respectively. 

\vspace{-1mm}
\subsection{Baseline models}
\vspace{-1mm}

\subsubsection{Baseline preferential attachment for hypergraphs}
We consider a naive extension of preferential attachment to hypergraphs. In this model, when filling in each hyperedge of each new node, existing nodes are chosen independently with a chance proportional to their individual degrees (instead of choosing groups of nodes based on degrees of groups). 
We refer to this model as \textit{NaivePA}. Its pseudocode is provided in Appendix~\ref{appendix:pseudocode_naivePA}.

\subsubsection{Subset Sampling}
This model, namely {\it Subset Sampling}, is inspired by \textit{Correlated Repeated Unions} \cite{benson2018sequences}, which was introduced to recapture temporal patterns in hyperedges. 
In \ssl, when a new hyperedge is formed, previous hyperedges are sampled, and then with a certain probability, their elements are chosen independently to fill in the new hyperedge. 
The pseudocode and details of \ssl can be found in Appendix~\ref{appendix:pseudocode}.


\ssl preserves \interactions to some degree, as some nodes in the same previous hyperedge can co-appear in the new hyperedge.
However, as demonstrated in Table \ref{tab:summary_connect_diameter_clust}, the \interactions captured by \ssl are often not connected well enough, making decomposed graphs easily shattered into isolated cliques without retaining a giant connected component.

\vspace{-1mm}
\subsection{Empirical evaluation}
\label{sec:gen:exp}
\vspace{-1mm}


We empirically investigate the properties of generated hypergraphs at four levels of decomposition. To facilitate comprehensive evaluation, we consider the following four datasets, which exhibit the 20 patterns most clearly (4 decomposed graphs $\times$ 5 patterns) to test the three generators on: 
\textit{DAWN}, \textit{email-Eu}, \textit{tags-ask-ubuntu}, and \textit{tags-math}.  
The generators are evaluated on how well they can reproduce the patterns in the real datasets.

We applied the proposed and baseline hypergraph generators to reproduce the real-world hypergraphs. For each considered real hypergraph, the distribution $S$ of the sizes of hyperedges, the distribution $NP$ of the number of new hyperedges per new node, and the exact number $n$ of nodes are directly learned.
Note that $S$, $NP$ and $n$ are the control variables exclusive to hypergraphs that are not directly relevant to how groups of nodes interact with each other, and thus they are out of the scope of this research.

In this paper, we make use of these variables learned directly from the real hypergraphs. Thus, for each real dataset, there are 3 corresponding synthetic datasets, generated by \generator, \ssl and NaivePA using the statistics $S$, $NP$ and $n$ obtained from the real dataset. Generating hypergraphs without explicitly accounting for these 3 variables is left as a topic for future research.

We measure the statistics from the decomposed graphs of the generated hypergraphs and calculate \blue{the} scores \blue{for the 3 generators}:
\begin{itemize}[$\bullet$]
\item \textbf{P1. Giant Conn. Comp.}: if the decomposed graph at that level of the generated hypergraph retains a giant connected component (as described in Sect.~\ref{sec:pattern:giant}), 1 point is given.
\item \textbf{P2. Heavy-tailed Degree Dist.}: the similarity between the generated degree distribution and the real distribution is measured by the Kolmogorov-Smirnov D-statistic, defined as $\max_{x}\{\|F'(x)-F(x)\|\}$ where $F, F'$ are the cumulative degree distributions of the corresponding real and generated decomposed graphs. 1 point is given to the generator having the D-statistic smaller than $0.2$.
\item \textbf{P3. Small Diameter}: we want the generated effective diameter $d'$ to be close to the real value $d$. As the pattern \textbf{P3.} is `small effective diameter', $d'$ should not be too large. At the same time, $d'$ being too small may be the sign of the decomposed graph being shattered without a `giant connected component'. We adopt a heuristic of the acceptance range as $(\frac{2d}{3}, \frac{4d}{3})$. If $d'$ is in the acceptance range, 1 point is given.
\item \textbf{P4. High Clustering Coeff.}: as the pattern \textbf{P4.} is `high clustering coefficient', it is desirable for the generated clustering coefficient $c'$ not to be too small compared to the real value $c$. However, $c'$ being too large may imply that the graph is shattered into isolated cliques. As the clustering coefficient is bounded above by 1, we adopt a heuristic of the acceptance range as $(\frac{2c}{3}, \min(\frac{4c}{3}, 1))$. If $c'$ is in the acceptance range, 1 point is given.
\item \textbf{P5. Skewed Singular Val.}: similar to \textbf{P2.}, the similarity between the singular-value distributions of the real and generated datasets is measured by the Kolmogorov-Smirnov D-statistic. 1 point is given to the generator having the D-statistic smaller than $0.2$.
\end{itemize}

Results of the generators are compared visually in Fig.~\ref{fig:summary_tags-ask-ubuntu_figure} and numerically in Tables ~\ref{tab:summary_Dstats} and \ref{tab:summary_connect_diameter_clust}. The total scores from the two tables for \generator, NaivePA and \ssl are 64, 49 and 57, respectively.
Note that our proposed model, \generator achieved the highest score. 
Without accounting for \interactions, variables $S$, $NP$ and $n$ are not sufficient to reproduce the patterns, as NaivePA and \ssl fail to do so even when utilizing $S$, $NP$ and $n$. 

\input{091summary_Dstats.tex}
\input{091summary_connect_diameter_clust.tex}

\vspace{-1mm}
\section{Conclusions}
\vspace{-1mm}
\label{sec:concl}
In summary, our contributions in this work are threefold.

\smallsection{Multi-level decomposition:}
First, we propose the multi-level decomposition as an effective means of investigating hypergraphs. The multi-level decomposition has several benefits: (1) it captures the group interactions within the hypergraph, (2) its graphical representation provides convenience in leveraging existing tools, and (3) it represents the original hypergraph without information loss.  

\smallsection{Patterns in real hypergraphs:}
Then, we present a set of common patterns held in $13$ real-world hypergraphs.
Specifically, we observe the following structural properties consistently at different decomposition levels 
(1) {\it giant connected components}, (2) {\it heavy-tailed degree distributions}, (3) {\it small effective diameters}, (4) {\it high clustering coefficients}, and (5) {\it skewed singular-value distributions}.
The connectivity of \interactions, however, varies among domains of datasets, illustrated by the level of decomposition that shatters the dataset into small connected components.

\smallsection{Realistic hypergraph generator:}
Lastly, we introduce \generator, a  hypergraph generator that is simple but capable of reproducing the patterns of real-world hypergraphs across different decomposition levels. By maintaining the connectivity of \interactions of nodes in the hypergraphs,  \generator shows better performance in reproducing the patterns than two other baseline models.

\smallsection{\bf Reproducibility:} We made the datasets, the code, and the full experimental results available at \url{https://github.com/manhtuando97/KDD-20-Hypergraph}.

\vspace{-1mm}
\subsection*{Acknowledgements}
\vspace{-1mm}
{\small This work was supported by National Research Foundation of Korea (NRF) grant funded by the
	Korea government (MSIT) (No. NRF-2020R1C1C1008296) and Institute of Information \& Communications
	Technology Planning \& Evaluation (IITP) grant funded by the Korea government (MSIT) (No. 2019-0-00075, Artificial Intelligence Graduate School Program (KAIST)).}

\bibliographystyle{ACM-Reference-Format}
\bibliography{references}
\clearpage

\appendix

\vspace{-1mm}
\section{Appendix: Dataset description}
\label{appendix:description}
\vspace{-1mm}

The thirteen datasets investigated in our work  are from the following sources:
\begin{itemize}[$\bullet$]
	\item \textbf{Publication coauthors}: each node is an author and each hyperedge is a publication involving one or several coauthors. The coauthorship hypergraphs considered in this paper are \textit{coauth-DBLP} \footnote{https://dblp.org/xml/release/}, \textit{coauth-Geology} \cite{sinha2015MAG}, \textit{coauth-History} \cite{xu2013hyperlink}.
	\item \textbf{Drug abuse warning network (\textit{DAWN}) drugs}: each node is a drug and each hyperedge is a set of drugs used by a patient.
	\item \textbf{Emails from an European research institution (\textit{email-Eu})}: each node is an email address and each hyperedge is a set of sender and all recipients of an email \cite{yin2017local}.
	\item \textbf{National drug code directory (NDC) drugs}: each node is a class label (in \textit{NDC-classes}) or a substance (in \textit{NDC-substances}) and each hyperedge is the set of labels/substances of a drug.
	\item \textbf{Online question tags}: each node is a tag and each hyperedge is the set of tags attached to a question in an online forum. We considered \textit{tags-ask-ubuntu}\footnote{https://askubuntu.com/}, \textit{tags-math}\footnote{https://math.stackexchange.com/}, \textit{tags-stack-overflow}\footnote{https://stackoverflow.com/}.
	\item \textbf{Thread participants}: each node is a user registered in an online forum and each hyperedge is the set of users participating in a discussion thread. There are 3 considered datasets: \textit{threads-ask-ubuntu}, \textit{threads-math}, \textit{threads-stack-overflow}.
\end{itemize}
We thank the authors of \cite{benson2018simplicial} for making the datasets publicly available for our research purposes. From the raw format, we preprocess each hypergraph to retain only unique hyperedges since duplicated hyperedges do not contribute to the above patterns. The distribution of hyperedge sizes are ploted in Fig.~\ref{fig:size_distribution}. For the decomposed graphs of the datasets, the numbers of nodes are reported in Tables~\ref{tab:node_properties} and \ref{tab:edge_triangle_4clique_properties}, and the numbers of edges are listed in Table~\ref{tab:summary_number_edges}. 

\input{070summary_number_edges.tex}

\begin{figure}[ht]
	\centering
	\vspace{0mm}
	\scalebox{0.5}{
	\includegraphics{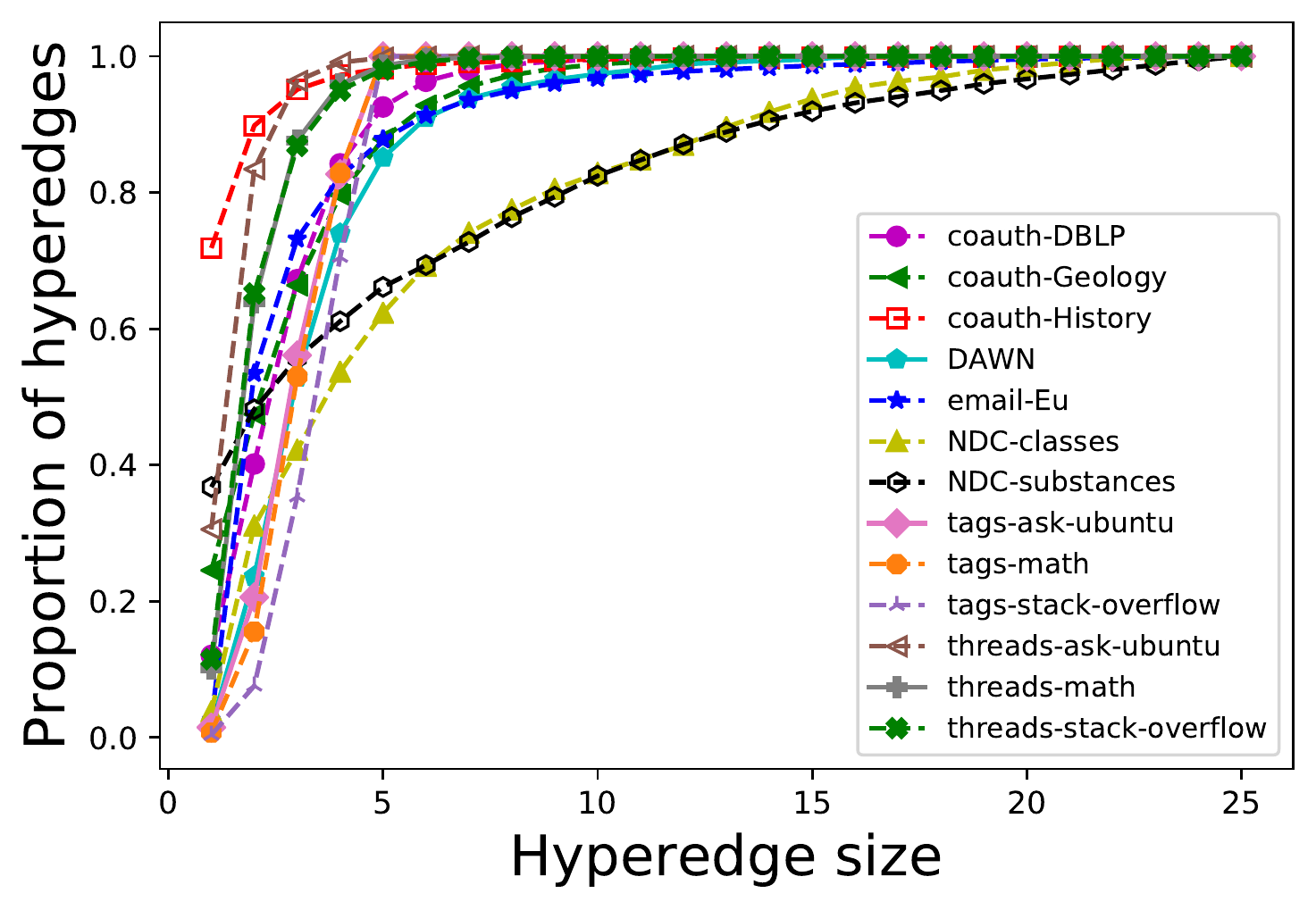}
	}
	\vspace{0mm}
	\caption{\label{fig:size_distribution}
		Cumulative distribution of hyperedge sizes.
	}
\end{figure}


\begin{algorithm}[t]
	\small
	\caption{NaivePA (Baseline Model) \label{algo:naive}} 
	\SetKwInOut{Inputs}{Inputs}
	\SetKwInOut{Output}{Output}
	\Inputs{(1) distribution of hyperedge sizes $S$ (with max size $\bar{s}$),\\ (2) distribution of number of new hyperedges $NP$,\\ (3) number of nodes $n$}
	\Output{synthetic hypergraph $G$}
	initialize $G$ with $\floor{{\bar{s}}/{2}}$ disjoint hyperedges of size $2$, and compute the degree of all nodes in them \\
	\For{$i \gets 1$ to $n$}{
		sample a number $k$ from $NP$.\\
		\For{$j \gets 1$ to $k$}{
			sample a hyperedge size s from $S$ \\
			\uIf{$s=1$}{
				add the hyperedge $\{i\}$ to $G$

			}\Else{
				choose $(s-1)$ nodes independently with probability proportional to their degrees \\
				add the hyperedge of $i$ and the $s-1$ nodes to $G$
			}
		}
		
		\For{{\bf each} \normalfont{node involved the $k$ newly formed hyperedges with $i$}}{
			increase the degree of each node by the number of its involving hyperedges
		}
	}
\end{algorithm}

\vspace{-1mm}
\section{Appendix: Pseudocode}

\vspace{-1mm}
\subsection{Pseudocode for NaivePA}
\vspace{-1mm}
\label{appendix:pseudocode_naivePA}
Pseudocode for NaivePA is provided in Algorithm~\ref{algo:naive}. Unlike \generator, which maintains the degree of every subset of every hyperedge, NaivePA only maintains the degree of individual nodes. When forming hyperedges with each newly arrived node, NaivePA chooses several nodes independently based on their degrees.

\begin{algorithm}[t]
	\small
	\caption{Subset Sampling\label{algo:SS} (Baseline Model)}
	\SetKwInOut{Inputs}{Inputs}
	\SetKwInOut{Output}{Output}
	\Inputs{(1) distribution of hyperedge sizes $S$, \\ (2) distribution of number of new hyperedges $NP$, \\ (3) number of nodes $n$, \\ (4) sampling rule $P$, \\ (5) probability $p$}
	\Output{synthetic hypergraph $G$}
	initialize $G$ with $2$ disjoint hyperedges of maximum size in $S$ \\
	\For{$i \gets 1$ to $n$}{
		sample a number $k$ from $NP$\\
		\For{$j \gets 1$ to $k$}{
			sample a size s from $S$ \\
			\eIf{$s = 1$}{
				add the hyperedge $\{i\}$ to $G$
			}{
				initialize $B$ to $\{i\}$\\
				\While{$|B| < s$}{
					initialize $T$ to an empty set\\
					sample a hyperedge $E$ of $G$ based on $P$\\
					sample each node $v \in E$ into $B$ with prob. $p$ \\
					\eIf{$|T| \leq s - |B|$}{
						$B \gets B \cup T$\\
					}{
						$T\leftarrow$ randomly chosen $s-|B|$ nodes in $T$ \\
						$B \gets B \cup T$ 
					}
				} 
				add the hyperedge $B$ to $G$
			}
		} 
	}
\end{algorithm}

\vspace{-1mm}
\subsection{Pseudocode for \ssl}
\label{appendix:pseudocode}
\vspace{-1mm}

We present the pseudocode for \ssl in Algorithm \ref{algo:SS}.
For \ssl, in order to keep the model simple, we tried the following variants for the sampling rule $P$: 
\bit
\item{\textit{random}}: a hyperedge is randomly chosen among all previously formed hyperedges.
\item{\textit{recent}}: among all available hyperedges $E_1, E_2,...,E_n$, hyperedge $E_i$ has probability of being chosen equal to $\frac{i}{\sum_{j=1}^{n} j}$.
\item{\textit{$k$ most recent}}: only sample a set based on \textit{random} or \textit{recent} from the  $k$ most recent hyperedges.
\eit
Empirical data shows that when $P$ is \textit{k most recent}, the resulting graph has an unrealistically high diameter, while none between \textit{random} and \textit{recent} outperforms the other. For probability $p$, increasing from $0.4$ does not significantly change the result, while too low values  make the graph shattered at the triangle-level decomposition. The reported results of \textit{Subset Sampling} are from $p = 0.8$ and $P= random$.

\vspace{-1mm}
\section{Appendix: Proofs}
\label{appendix:proof}

\vspace{-1mm}
\subsection{Recovering hypergraphs from decomposed graphs}
\label{appendix:proof:recovery}
\vspace{-1mm}

In this section, we prove that the original hypergraph can be recovered exactly from its decomposed graphs.
To this end, we consider decomposed graphs with self-loops and edge weights, which are ignored in the previous sections since they do not contribute to the presented patterns.
Specifically, for each $k$-level decomposed graph $G_{(k)}=(V_{(k)},E_{(k)})$ of a hypergraph $G=(V,E)$, we introduce a weight function $\omega_{(k)}$, defined as follows:
\begin{align*}
& \omega_{(k)} \big(\{u_{(k)},v_{(k)}\} \big) :=
|\{e\in E: u_{(k)} \cup v_{(k)}\subseteq e\}|.
\end{align*}
That is, for each edge $\{u_{(k)},v_{(k)}\}$ in $E_{(k)}$, $\omega_{(k)}$ gives the number of hyperedges in $E$ that contain the union of $u_{(k)}$ and $v_{(k)}$.
Additionally, for each hyperedge $\{a\}\in E$ of size $1$, we add a self-loop to the node $\{a\}$ in the $1$-level decomposed graph.

\begin{theorem}\textsc{(Recovery of original hypergraphs).}
Assume that the maximum size of a hyperedge in a given hypergraph is $m$. If we have all the decomposed graphs up to level $(m-1)$ with edge weights and self-loops, we can recover the original hypergraph.
\end{theorem}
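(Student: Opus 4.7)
The goal is to recover, for every nonempty subset $S \subseteq V$, the multiplicity $f(S) := |\{e \in E: e = S\}|$ of $S$ as a hyperedge. I will do this by first recovering the ``upward count'' $N(T) := |\{e \in E: T \subseteq e\}|$ for every nonempty $T$ with $|T| \le m$, and then inverting the relation $N(T) = \sum_{S \supseteq T,\ |S| \le m} f(S)$ either by finite M\"obius inversion or, equivalently, by downward induction on $|T|$ via $f(T) = N(T) - \sum_{S \supsetneq T} f(S)$.

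The core observation is that for any $T$ with $2 \le |T| \le m$, the value $N(T)$ can be read off from a single decomposed graph at level $k := \lceil |T|/2 \rceil$. Indeed, I can write $T = u \cup v$ with $|u|=|v|=k$ and $u \ne v$, taking $u$ and $v$ to overlap in a single element when $|T|$ is odd and as a disjoint partition when $|T|$ is even. If $N(T) \ge 1$, then $u, v \in V_{(k)}$, the edge $\{u,v\}$ lies in $E_{(k)}$, and by definition $\omega_{(k)}(\{u,v\}) = N(T)$; if $\{u,v\} \notin E_{(k)}$, then $N(T) = 0$. Since $|T| \le m$ forces $k \le \lceil m/2 \rceil \le m-1$ whenever $m \ge 2$, the $k$-th decomposed graph is among the ones I am given, so every such $N(T)$ is accessible.

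For singletons $T = \{a\}$, the quantity $N(\{a\})$ counts hyperedges of every size that contain $a$, and it is not the weight of any single edge. Instead, I use the self-loop: by the augmented definition, $\omega_{(1)}$ on the loop at $\{a\}$ equals the number of size-$1$ hyperedges equal to $\{a\}$, which is $f(\{a\})$ itself. Combined with the values $f(S)$ for $|S| \ge 2$ recovered by the recurrence above, this yields $f$ on all nonempty subsets, and hence reconstructs the complete multiset $E$ of hyperedges.

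The main point that still needs justification is the arithmetic inequality $\lceil m/2 \rceil \le m-1$, which guarantees that levels $1,\dots,m-1$ are enough to expose every $T$ with $2 \le |T| \le m$ as the union of two equally sized subsets; this is immediate for $m \ge 2$, while the degenerate case $m=1$ is trivial since all hyperedges are then singletons recovered directly from the self-loops in $G_{(1)}$. The remainder of the argument is routine bookkeeping: verify that the decomposition $T = u \cup v$ with $u \ne v$ is always legal in the sizes above, apply the recurrence top-down from $|T|=m$ down to $|T|=2$, and read off the singleton multiplicities from the self-loop weights.
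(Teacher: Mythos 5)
Your proposal is correct, but it takes a genuinely different route from the paper's. The paper recovers hyperedges by iterative peeling: it uses the $(k-1)$-level decomposed graph to recover the size-$k$ hyperedges, working from size $m$ down to size $2$, and after each stage it decrements the weights of every edge in the clique induced by each recovered hyperedge so that the next level can be read off cleanly; self-loops then supply the singletons. You instead recover, for every subset $T$ with $2 \le |T| \le m$, the upward count $N(T)$ in one shot by splitting $T$ into two equal-size halves (disjoint when $|T|$ is even, overlapping in one element when odd) and reading the weight $\omega_{(k)}$ at level $k = \lceil |T|/2 \rceil$, after which the multiplicities $f$ follow by M\"obius inversion (downward induction) on the subset lattice. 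Your argument is algebraically cleaner --- it replaces the paper's decrement-and-clear bookkeeping with a single explicit inversion --- and it actually proves a stronger statement: the decomposed graphs up to level $\lceil m/2 \rceil$, together with the level-$1$ self-loops, already suffice, whereas the paper's peeling scheme consumes all levels up to $m-1$. The paper's approach, in exchange, is more directly algorithmic and mirrors the clique structure of the decomposition, which makes the termination of the recovery procedure visually evident. One small point worth making explicit in your write-up: the sum $\sum_{S \supsetneq T} f(S)$ ranges over exponentially many subsets a priori, but only the finitely many $S$ already identified as hyperedges at earlier stages of the downward induction contribute, so the recurrence is effectively computable.
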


\begin{proof}

Initialize an empty set $S = \emptyset$, which will contain the recovered hyperedges.
We recover the hyperedges sized from the largest to smallest.
By our definition, a hyperedge of size $n > k$ results in a clique of size ${n}\choose{k}$ in the $k$-level decomposed graph.

We start with the $(m-1)$-level decomposed graph: for each edge between two distinct $(m-1)$-level nodes $\{a_1,...,a_{m-1}\}$ and $\{b_1, ...,b_{m-1}\}$, as $m$ is the maximum size for any hyperedge, the union of these two $(m-1)$-level nodes must be an original hyperedge $e$ of size $m$. We add this hyperedge $e$ into $S$ and decrement the weight of each edge involved in the resulting clique of $e$ in the $(m-1)$-level decomposed graph. We keep doing this until we completely clear the graph 
(i.e., making the weights of all edges to $0$) to recover all hyperedges of size $m$. 

Assume that we have recovered all hyperedges of sizes greater than $k$ and have stored them in $S$. In the $(k-1)$-level decomposed graph, we decrement the weight of each edge involved in the clique resulting from each hyperedge currently in $S$.
Then, we repeat the process above to recover all hyperedges of size $k$.

By continuing this procedure, eventually we can also recover all hyperedges of sizes at least $2$ after processing the node-level decomposed graph (i.e., $1$-level decomposed graph).
Since we also maintain self-loops, we can recover all hyperedges of size $1$.
The proof is completed here.
\end{proof}

\vspace{-1mm}
\subsection{Randomness in \generator}
\label{appendix:proof_hyperPA}
\vspace{-1mm}

We present a simple proof about the randomness in \generator. 

\begin{theorem}\textsc{(Randomness in \generator).}
Given that the largest size $s$ possible in the distribution $S$ is a finite number $\bar{s}$, the conditional statement at line 8 of Algorithm~\ref{algo:main}, denoted as statement \textbf{U}, holds at most $\floor{\frac{\bar{s}}{2}-1}$ times.
\end{theorem}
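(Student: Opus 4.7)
The plan is to track the maximum hyperedge size present in $G$ over the course of Algorithm~\ref{algo:main}, and to show that every trigger of statement \textbf{U} forces this maximum to advance by at least $2$. Since hyperedge sizes are capped at $\bar{s}$, only finitely many such advances can occur, which yields the bound $\lfloor \bar{s}/2 - 1 \rfloor$.

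The first step is to re-interpret \textbf{U}. By Algorithm~\ref{algo:main}'s bookkeeping (line~16), the degree of a group $T$ equals the number of hyperedges in $G$ that contain $T$. Hence a group of size $s-1$ has positive degree iff some current hyperedge contains it, iff some current hyperedge has size at least $s-1$. So \textbf{U} holds at line~8 iff the largest hyperedge currently in $G$ has size at most $s-2$. Writing $M$ for this largest size, \textbf{U} holding means $s \geq M + 2$.

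The second step tracks $M$. The initialization seeds $G$ with hyperedges of size $2$, so the initial value is $M_{0}=2$, and hyperedges are only added (never removed), so $M$ is monotone nondecreasing. Suppose \textbf{U} triggers for the $t$-th time with sampled size $s_{t}$, and let $M_{t-1}$ be the value of $M$ just before this trigger and $M_{t}$ be its value just after. By the previous paragraph, $s_{t} \geq M_{t-1} + 2$. Lines~9-10 then insert a hyperedge of size $s_{t}$ into $G$, giving $M_{t} \geq s_{t}$. Composing, $M_{t} \geq M_{t-1}+2$, so by induction $M_{t} \geq M_{0} + 2t = 2t+2$, and therefore $s_{t} \geq 2t+2$.

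Finally, since $s_{t}$ is drawn from $S$ whose maximum value is $\bar{s}$, we have $2t+2 \leq s_{t} \leq \bar{s}$, so $t \leq \bar{s}/2 - 1$, and since $t$ is an integer $t \leq \lfloor \bar{s}/2 - 1 \rfloor$. The only delicate point is the equivalence between \textbf{U} and the inequality $M \leq s-2$, which rests on reading ``degree of a group'' in the group-level sense maintained by line~16 rather than as an individual-node degree; the rest of the argument is straightforward monotonicity.
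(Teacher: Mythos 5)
Your proof is correct and follows essentially the same route as the paper's: both arguments observe that a trigger of \textbf{U} with sampled size $s$ forces the current maximum hyperedge size to be at most $s-2$, that the trigger then inserts a hyperedge of size $s$, and hence that consecutive trigger sizes must increase by at least $2$ starting from $s_1 \geq 4$ and bounded above by $\bar{s}$. Your explicit invariant $M_t \geq 2t+2$ is just a cleaner packaging of the paper's chained inequalities $s_1 \leq s_2 - 2 \leq \cdots \leq s_M - 2(M-1)$.
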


\begin{proof}
Assume that at a given time step $t$, the sampled size at line 5 is $s$ and \textbf{U} holds. Then, the following conditions must be satisfied:
\begin{enumerate}
\item All $(s-1)$-sized groups have 0 degree, i.e., up to the time step $t$, only hyperedges of sizes up to $s-2$ present in the hypergraph,
\item $s \geq 4 $,
\end{enumerate}
where the second condition is from the first condition and the fact that the hypergraph is initialized with $12$ hyperedges of size $2$.
Denote two consecutive time steps when \textbf{U} holds as $t$ and $t'$, respectively. 
Denote the hyperedge sizes sampled at line 5 at time steps $t$ and $t'$ as $s$ and $s'$, respectively. 
According to the above two conditions, $s \geq 4$ and $s \leq s' - 2$. 
Assume \textbf{U} holds $M$ times at time steps $t_{1}, .., t_{M}$, and denote the hyperedge sizes sampled at line 5 of the algorithm at these time steps as $s_{1}, ..., s_{M}$, respectively. Then, as shown, $$s_{1} \leq s_{2} - 2 \leq s_{3} - 4 \leq ... \leq s_{M} - 2 \times (M-1).$$
Then, $2 \times (M-1) \leq s_{M} - s_{1}$.
This, $s_{1} \geq 4$, and $s_{M} \leq \bar{s}$ imply $2 \times (M-1) \leq \bar{s} - 4$ or equivalently $M \leq \frac{\bar{s}}{2}-1$.
As $M$ must be an integer, we conclude that $M \leq \floor{\frac{\bar{s}}{2}-1}$.
\end{proof}

As in our datasets, the maximum hyperedge size is 25 and the distribution $S$ used for \generator is learned from the dataset, we have $\bar{s} = 25$ for \generator. According to the proof, the conditional statement at line 8 of Algorithm~\ref{algo:main} can only hold at most 11 times.
If the number of nodes $n$ is relatively large, most of the time when $s>1$, the conditional statement at line 8 in Algorithm~\ref{algo:main} does not hold, indicating that lines 12-13 of the pseudocode are executed.

\clearpage
\end{document}